\documentclass[11pt,a4paper]{article}
\usepackage{amsfonts,amsmath,amssymb,amsthm}
\usepackage[margin=1in]{geometry}
\usepackage{array}
\usepackage{hyperref}
\usepackage{xcolor}
\usepackage[colorinlistoftodos,backgroundcolor=blue!10,linecolor=red,bordercolor=red]{todonotes}

\usepackage{lineno}

\newcolumntype{C}{>{\hfil$}p{\mylen}<{$\hfil}}

\newcommand{\heading}[1]{\medskip\par\noindent{\bf #1}}







\newenvironment{packed_itemize}{
    \begin{itemize}
        \setlength{\itemsep}{1pt}
        \setlength{\parskip}{0pt}
        \setlength{\parsep}{0pt}
}{\end{itemize}}

  
  \def\calF{{\mathcal F}}

  \def\calO{{\mathcal O}}
\def\calP{{\mathcal P}}



\def\cyc{\mathbb{Z}}
\def\real{\mathbb{R}}
\def\complex{\mathbb{C}}
\def\torus{\mathbb{T}}

\DeclareMathOperator{\diag}{diag}

\newtheorem{theorem}{Theorem}
\newtheorem{lemma}[theorem]{Lemma}

\theoremstyle{definition}

\newtheorem{example}{Example}

\title{Discrete and Fast Fourier Transform Made Clear}

\author{
	Peter Zeman
	\thanks{Department of Applied Mathematics, Faculty of Mathematics and Physics, Charles University, Prague, Czech Republic, \texttt{zeman@kam.mff.cuni.cz}.}
}

\date{}

\begin{document}

\maketitle

\begin{abstract}
Fast Fourier transform was included in the Top 10 Algorithms of 20th Century by \emph{Computing in Science \& Engineering}.
In this paper, we provide a new simple derivation of both the discrete Fourier transform and fast Fourier transform by means of elementary linear algebra.
We start the exposition by introducing the convolution product of vectors, represented by a circulant matrix, and derive the discrete Fourier transform as the change of basis matrix that diagonalizes the circulant matrix.
We also generalize our approach to derive the Fourier transform on any finite abelian group, where the case of Fourier transform on the Boolean cube is especially important for many applications in theoretical computer science.
\end{abstract}

\section{Introduction}

\emph{Discrete Fourier transform} is a change of basis matrix which for a vector given by the coordinates in the standard basis computes its coordinates in the Fourier basis.
Then \emph{fast Fourier transform} is just a fast way of computing the discrete Fourier transform of any vector.
Fast Fourier transform is according to \emph{Computing in Science \& Engineering} one of the top 10 most influential algorithms of 20th century~\cite{cipra2000best}.
Without a doubt, every theoretical computer scientist should have at least a basic understanding of the fundamental principles of this algorithm.

There are many excellent books dealing with these topics, for example~\cite{dasgupta2008algorithms,kantor2015mathematics++,pereyra2012harmonic,strang1986introduction,terras1999fourier}.
In this paper, we offer a new presentation of the discrete and the Fast Fourier transform.
We believe that the key in an exposition that makes the material more accessible to a wider audience, such as undergraduates and people interested in algorithms, is to start with some appealing motivation.
Our starting point the convolution product of two vectors in the complex vector space $\complex^n$.
In general, convolution is a fundamental concept in mathematics and it appears in many different forms with applications in image processing, digital data processing, acoustics, electrical engineering, physics, probability theory, multiplication of polynomials, and more.
An important part of our presentation is that no new notion is introduced out of the blue, but with a clear justification.

In the space $\complex^n$, the convolution $c * d$ of two vectors $c$ and $d$ is again a vector that can be represented by constructing the circulant matrix $C$ of $c$ and computing the product $Cd$.
It is clear that $c * d$ can be computed in $\calO(n^2)$ time, however, due to the numerous applications of convolution, it is desirable to compute it faster.
This naturally leads to the spectral decomposition of $C$, which is guaranteed by the very specific structure of a circulant matrix.
The basis of orthogonal eigenvectors which diagonalizes $C$ is exactly the Fourier basis.
Fast Fourier transform computes the coordinates of a vector in the Fourier basis in time $\calO(n\log{n})$, which then also gives an algorithm computing the convolution of two vectors in the same time.

In general, Fourier analysis provides an orthogonal basis of complex functions $G\to \complex$ defined on an abelian group $G$.
Note that the vectors in $\complex^n$ can be identified with the functions $\cyc_n \to \complex$, where $\cyc_n$ are the integers modulo $n$ with addition.
Some of the important settings are when $G$ for example $\real$, $\real/\cyc$, which correspond to the classical Fourier transform and Fourier series, respectively, $\cyc_n$, which corresponds to the discrete Fourier transform, and $\cyc_2^n$, which corresponds to the Fourier analysis on the Boolean cube.

In this paper, we focus on the case when $G$ is finite where we can think of the functions $G\to \complex$ as complex vectors indexed by $G$.
We generalize the approach for $G = \cyc_n$ to derive the Fourier transform on $G$.
To this end we recall the, not so well-known, definition of a $G$-circulant matrix, which can be used to represent the convolution of complex functions on $G$.
The Fourier basis is formed by the eigenvectors of the $G$-circulant.
We derive a recursive description of $G$-circulants and of their eigenvectors, which, to the best of our knowledge, is not stated in the literature in the form as it is in this paper.


The main idea underlying Fourier analysis on finite abelian groups is a basic fact of linear algebra: if a linear mapping has a orthogonal basis of eigenvectors, we can see it as a diagonal matrix in this basis.
The infinite-dimensional case is more complicated but the rough idea is similar.
We will see this manifested throughout the paper.

\heading{Prerequisites.}
We assume that the reader is familiar with basic linear algebra.
In particular, the important concepts which we need are the following: representation of a linear mapping by a matrix, eigenvalues and eigenvectors, spectral decomposition.

\heading{Structure of the paper.}
In Section~\ref{sec:dft}, we derive the discrete Fourier transform.
In Section~\ref{sec:fft}, we derive the Fast Fourier transform using a matrix notation, which is more transparent.
In Section~\ref{sec:abelian}, we generalize the approach from Section~\ref{sec:dft} to any finite abelian group $G$.

\heading{Acknowledgements.}
We would like to thank Pavel Klav\'{i}k, Martin \v{C}ern\'{y}, Milan Hlad\'{i}k, and Roman Nedela for many useful comments.

\section{Discrete Fourier Transform}
\label{sec:dft}

Our starting point is the discrete circular convolution, which naturally leads to the discrete Fourier transform.
It is an operation that, given two vectors $c, d \in \complex^n$, produces a third vector $c * d \in \complex^n$.
Instead of saying what each component of the vector $c * d$ looks like, we take a different approach.
For the vector $c$, we construct the $n\times n$ \emph{circulant matrix}:
$$
C = 
\left [\begin{array}{ccccc}
c_0		& c_{n-1}	& \cdot		& c_{2}		& c_{1}  \\
c_{1}		& c_0 		& c_{n-1}	& \cdot 	& c_{2}  \\
\cdot  		& c_{1}		& c_0    	& \cdot 	& \cdot   \\
c_{n-2}		& \cdot		& \cdot		& \cdot 	& c_{n-1}   \\
c_{n-1}		& c_{n-2}	& \cdot		& c_{1} 	& c_0
\end{array}\right ].
$$
The first column of $C$ is the vector $c$ and each column is the cyclic shift of the previous column by one position in the downward direction.

\begin{example}
\label{eq:circulant}
The circulant matrix corresponding to the vector
$\left [\begin{array}{ccc}
1 & 2 & 3
\end{array}\right ]^T$ is the matrix
$$
\left [\arraycolsep=3pt
\begin{array}{ccc}
1 & 3 & 2\\
2 & 1 & 3\\
3 & 2 & 1
\end{array}\right ].
$$
\qed
\end{example}

We define the \emph{convolution} of the vectors $c$ and $d$ by
$$c * d := Cd.$$
Here, the matrix $C$ represents the linear mapping $\complex^n \to \complex^n$ defined by $d \mapsto c * d$.

By definition, computing $c * d$ requires $n^2$ arithmetic operations, which is the number of operations needed for a matrix-vector multiplication.
However, one can see that the structure of the matrix $C$ is very special.
It turns out that the eigenvectors of $C$ form an orthogonal basis of $\complex^n$, which means that, in a suitable basis, the convolution can be represented by a diagonal matrix.
This can be used to derive an algorithm, called \emph{fast Fourier transform}, that computes $c * d$ in time $\calO(n\log{n})$.

\subsection{Eigenvectors of a Circulant Matrix}

In linear algebra, the first thing to do when one encounters a new linear mapping, is to try to compute its eigenvalues and eigenvectors.
In case the eigenvectors form an orthogonal basis, then the linear mapping is represented by a diagonal matrix with respect to this basis.
As we will see, this is exactly the case for every circulant.

Note that we can write
$$C = c_0I + c_{n-1}P + c_{n-2}P^2 + \cdots + c_1P^{n-1},$$
where
$$P =
\left [\arraycolsep=3pt
\begin{array}{ccccc}
0 & 1 & \cdot & 0 & 0 \\
0 & 0 & 1 & \cdot & 0 \\
\cdot  & 0 & 0 & \cdot  & \cdot \\
0 & \cdot & \cdot & \cdot & 1 \\
1 & 0 & \cdot & 0 & 0 \\
\end{array}\right ].
$$

\begin{example}
\label{ex:circulant_split}
For the vector
$\left [\begin{array}{ccc}
1 & 2 & 3
\end{array}\right ]^*$, we have
\begin{align*}
\left [\arraycolsep=3pt
\begin{array}{ccc}
1 & 3 & 2\\
2 & 1 & 3\\
3 & 2 & 1
\end{array}\right ]
&=
1\cdot
\left [\arraycolsep=3pt
\begin{array}{ccc}
1 &  & \\
 & 1 & \\
 &  & 1
\end{array}\right ]
+
3\cdot
\left [\arraycolsep=3pt
\begin{array}{ccc}
 & 1 & \\
 &  & 1\\
1 &  & 
\end{array}\right ]
+
2\cdot
\left [\arraycolsep=3pt
\begin{array}{ccc}
 &  & 1\\
1 &  & \\
 & 1 & 
\end{array}\right ]\\
&=
1\cdot
\left [\arraycolsep=3pt
\begin{array}{ccc}
 & 1 & \\
 &  & 1\\
1 &  & 
\end{array}\right ]^0
+
3\cdot
\left [\arraycolsep=3pt
\begin{array}{ccc}
 & 1 & \\
 &  & 1\\
1 &  & 
\end{array}\right ]^1
+
2\cdot
\left [\arraycolsep=3pt
\begin{array}{ccc}
 & 1 & \\
 &  & 1\\
1 &  & 
\end{array}\right ]^2.\\[2mm]
\end{align*}
\qed
\end{example}

If $v$ is an eigenvector of $P$, i.e., $P v = \lambda v$, then we have
\begin{align*}
C v &= (c_0I + c_{n-1}P + c_{n-2}P^2 + \cdots + c_1P^{n-1})v = c_0I v + c_{n-1}P v + c_{n-2}P^2 v + \cdots + c_1P^{n-1}v\\
&= c_0I v + c_{n-1}\lambda v + c_{n-2}\lambda^2 v + \cdots + c_1\lambda^{n-1}v = (c_0 + c_{n-1}\lambda + c_{n-2}\lambda^2 + \cdots + c_1\lambda^{n-1})v.
\end{align*}
In other words, every eigenvector $v$ of $P$ is an eigenvector of $C$.
The corresponding eigenvalue of $C$ can then be computed from the formula
\begin{equation}
\label{eq:circulant_eigenvalue1}
c_0 + c_{n-1}\lambda + c_{n-2}\lambda^2 + \cdots + c_1\lambda^{n-1},
\end{equation}
where $\lambda$ is the eigenvalue of $P$ corresponding to the eigenvector $v$.
To determine the eigenvalues and eigenvectors of $C$, it suffices to determine the eigenvalues and eigenvectors of $P$.

\heading{Eigenvectors of the Matrix $P$.}
Suppose that $P v = \lambda v$.
The matrix $P$ is a permutation matrix, and therefore, it is unitary, i.e., $P^*P = I$.
The eigenvalues of $P$ lie on the unit circle in $\complex$.
In fact, this is true for every unitary matrix $U$:
$$
\|v\| = \|U v\| = \|\lambda v\| = |\lambda|\cdot \|v\|.
$$
This implies $|\lambda| = 1$.
Moreover, we can see that the order of the permutation represented by $P$ is $n$, i.e., $P^n = I$.
It follows that if $\lambda$ is an eigenvalue of $P$, then $\lambda^n$ is an eigenvalue of $I$.
Since we know that $I$ has all its eigenvalues equal to $1$, the candidates for eigenvalues of $P$ are exactly the $n$-th roots of unity:
$w^k$, for $k = 0, 1, \dots, n-1,$ and $w = e^{2\pi i/n}$.

Now, we find the eigenvector
$\chi_k =
\left [\begin{array}{ccccc}
x_0 & x_1 & \cdot & x_{n-1} & x_{n-1}
\end{array}\right ]^T
$
associated to the eigenvalue $w^k$.
Suppose that $P \chi_k = w^k \chi_k$.
We have
$$
\left [\begin{array}{c}
x_{1}\\
x_{2}\\
\cdot\\
x_{n-1}\\
x_{0}
\end{array}\right ]
= P \chi_k = w^k\chi_k =
w^k
\left [\begin{array}{c}
x_{0}\\
x_{1}\\
\cdot\\
x_{n-2}\\
x_{n-1}
\end{array}\right ].
$$
The entries $x_0,\dots,x_{n-1}$ of the vector $\chi_k$ must satisfy the following system of linear equations:
\begin{align*}
x_{1} &= w^k x_0,\\
x_{2} &= w^kx_{1} = w^{2k} x_0,\\
&\vdots\\
x_{n-1} &= w^kx_{n-2} = \cdots = w^{(n-1)k}x_0,\\
x_0 &= w^kx_{n-1} = \cdots = w^{nk}x_0 = x_0.
\end{align*}
If we pick an arbitrary value for $x_0$, then $x_1,\dots,x_{n-1}$ are uniquely determined.
We put $x_0 = w^{0k} = 1$.
The eigenvectors of $P$, and therefore also the eigenvectors of $C$, are
$$
\chi_k =
\left [\def\arraystretch{1.5}
\begin{array}{c}
w^{0\cdot k}\\
w^{1\cdot k}\\
\cdot\\
w^{(n-1)k}
\end{array}\right ]
=
\left [\def\arraystretch{1.5}
\begin{array}{c}
e^{2\pi i0k/n}\\
e^{2\pi i1k/n}\\
\cdot\\
e^{2\pi i(n-1)k/n}
\end{array}\right ],
$$
for $k = 0,\dots,n-1$.
The corresponding eigenvalues of $C$, satisfying $C\chi_k = \Lambda_k\chi_k$, are given by
$$
\Lambda_k = c_0 + c_{n-1}w^k + c_{n-2}w^{2k} + \cdots + c_1w^{(n-1)k}.
$$

\subsection{Fourier Basis}

The key property of the eigenvectors $\chi_0,\dots,\chi_{n-1}$ of $C$ is that they form an
orthogonal basis of $\complex^n$, called the \emph{Fourier basis}. This means that we can apply spectral
decomposition to every circulant matrix $C$.

\begin{lemma}
\label{lem:fourier_orthogonal}
The vectors $\chi_0,\dots,\chi_{n-1}$ form an orthogonal basis.
\end{lemma}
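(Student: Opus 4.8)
The plan is to exploit the fact that we already possess exactly $n$ candidate vectors living in the $n$-dimensional space $\complex^n$. Consequently it suffices to prove that $\chi_0,\dots,\chi_{n-1}$ are pairwise orthogonal (and nonzero) with respect to the standard Hermitian inner product: a family of $n$ nonzero pairwise orthogonal vectors is automatically linearly independent, and $n$ linearly independent vectors in $\complex^n$ form a basis. Thus the entire statement collapses to a single inner-product computation. (There is also a slicker high-level route, since $P$ is unitary, hence normal, and its eigenvalues $w^0,\dots,w^{n-1}$ are distinct, so the $\chi_k$ must be orthogonal; however, the direct computation below is more in keeping with the elementary spirit of the paper and has the bonus of supplying the norms $\|\chi_k\|$, which we will want when normalizing the Fourier basis.)

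First I would write out the inner product of two eigenvectors explicitly. Recalling that the $j$-th entry of $\chi_k$ equals $w^{jk}$ with $w = e^{2\pi i/n}$, and that conjugation sends $w^{j\ell}$ to $w^{-j\ell}$ because $|w| = 1$, I obtain
$$
\inner{\chi_k}{\chi_\ell} = \sum_{j=0}^{n-1} w^{jk}\,\overline{w^{j\ell}} = \sum_{j=0}^{n-1} w^{j(k-\ell)} = \sum_{j=0}^{n-1}\bigl(w^{k-\ell}\bigr)^{j}.
$$
The right-hand side is a geometric series with ratio $w^{k-\ell}$, and evaluating it is the crux of the argument.

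Next I would split into two cases according to whether $k = \ell$. When $k = \ell$ the ratio equals $1$, every summand is $1$, and the sum is simply $n$; this records that $\|\chi_k\|^2 = n$, so in particular every eigenvector is nonzero. When $k \neq \ell$, with both indices in $\{0,\dots,n-1\}$, the exponent $k-\ell$ is a nonzero integer strictly between $-n$ and $n$, so $w^{k-\ell} \neq 1$ because $w$ is a primitive $n$-th root of unity. The closed form of the geometric series then gives $\bigl(w^{(k-\ell)n}-1\bigr)/\bigl(w^{k-\ell}-1\bigr)$, whose numerator vanishes since $w^{(k-\ell)n} = (w^n)^{k-\ell} = 1$. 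Hence $\inner{\chi_k}{\chi_\ell} = 0$ whenever $k \neq \ell$.

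The one point that genuinely requires care, as opposed to mechanical calculation, is the case distinction itself: the formula $\sum_{j=0}^{n-1} r^j = (r^n-1)/(r-1)$ is valid only for $r \neq 1$, so I must verify separately that the denominator $w^{k-\ell}-1$ does not vanish, which is exactly where the primitivity of $w$ enters. Everything else is bookkeeping. Combining the two cases yields $\inner{\chi_k}{\chi_\ell} = n\,\delta_{k\ell}$, which establishes the orthogonality and nonvanishing of the $\chi_k$, and therefore that $\chi_0,\dots,\chi_{n-1}$ form an orthogonal basis of $\complex^n$.
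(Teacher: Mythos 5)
Your proof is correct and follows essentially the same route as the paper: a direct computation of the inner product $\chi_k^*\chi_\ell$ as a geometric series, yielding $n$ when $k=\ell$ and $0$ otherwise since the ratio is a nontrivial $n$-th root of unity. In fact you are slightly more careful than the paper's own write-up, since you explicitly verify that $w^{k-\ell}\neq 1$ before applying the geometric series formula and spell out why $n$ pairwise orthogonal nonzero vectors in $\complex^n$ form a basis.
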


\begin{proof}
We have
$$\chi_k^*\chi_\ell = \sum_{j=0}^{n-1} \overline{w^{jk}}w^{j\ell} =
\sum_{j=0}^{n-1} w^{-jk}w^{j\ell} = \sum_{j=0}^{n-1} w^{j(\ell-k)}.$$
If $k = \ell$, then clearly $\chi_k^*\chi_\ell = n$. If $k \neq \ell$, then we have
$$\chi_k^*\chi_\ell = \sum_{j=0}^{n-1} w^{(\ell - k)j} = \alpha^0 + \alpha^1 + \cdots +
\alpha^{n-1} = \frac{1 - \alpha^n}{1 - \alpha} = 0,$$
where $\alpha = w^{\ell- k}$.  The last equality follows from the fact that
$e^{2\pi ik} = 1$.

There is more geometric way of proving that
$\alpha^0 + \alpha^1 + \cdots + \alpha^{n-2} + \alpha^{n-1} = 0$.
Since the complex number $\alpha$ is nonzero, $\alpha z = z$ implies $z = 0$.
The equality $\alpha z = z$ is clearly satisfied by $\alpha^0 + \alpha^1\cdots+ \alpha^{n-2}+\alpha^{n-1}$.
Geometrically, multiplication by $\alpha$ is a rotation by some angle $\theta$.
The complex numbers $\alpha^0, \alpha^1, \dots, \alpha^{n-1}$ are the vertices of a regular $m$-gon (note that $m$ divides $n$).
The angle $\theta$ is then equal to $2\pi/m$.
Rotation by angle $\theta$ just permutes the vertices of the $m$-gon, thus the sum remains the same.
\end{proof}

\subsection{Discrete Fourier Transform as Change of Basis Matrix}

Discrete Fourier transform is the change of basis matrix from the standard basis to the Fourier basis $\{\chi_0,\dots,\chi_{n-1}\}$.
We can easily find the change of basis matrix $F$ from the Fourier basis to the standard basis, by placing the vectors $\chi_0,\dots,\chi_{n-1}$ into the columns:
$$
F =
\left [\begin{array}{cccc}
\vline & \vline & & \vline\\ 
\chi_0 & \chi_1 & \cdot & \chi_{n-1}\\
\vline & \vline & & \vline\\ 
\end{array}\right ].
$$
Note that in Lemma~\ref{lem:fourier_orthogonal}, we proved that
$$
F^*F =
\left [\begin{array}{ccc}
\rule[.5ex]{0.5cm}{0.4pt} & \chi_0^* & \rule[.5ex]{0.5cm}{0.4pt}\\
\rule[.5ex]{0.5cm}{0.4pt} & \chi_1^* & \rule[.5ex]{0.5cm}{0.4pt}\\
& \cdot &\\
\rule[.5ex]{0.5cm}{0.4pt} & \chi_{n-1}^* & \rule[.5ex]{0.5cm}{0.4pt}\\
\end{array}\right ]\cdot
\left [\begin{array}{cccc}
\vline & \vline & & \vline\\ 
\chi_0 & \chi_1 & \cdot & \chi_{n-1}\\
\vline & \vline & & \vline\\ 
\end{array}\right ]
= nI.
$$
By rearranging, we have $\frac{1}{n}F^*F = I$. The matrix $D:= \frac{1}{n}F^*$ is
called \emph{discrete Fourier transform} and the \emph{inverse discrete Fourier transform} is $D^{-1} = \left (\frac{1}{n}F^*\right )^{-1} =
F$. In particular, we have
$$F =
\left [\begin{array}{ccccc}
1 & 1 & 1 & \cdot & 1\\
1 & w & w^2 & \cdot & w^{n-1}\\
1 & w^2 & w^4 & \cdot & w^{2(n-1)}\\
\cdot & \cdot & \cdot & \cdot & \cdot\\
1 & w^{n-1} & w^{2(n-1)} & \cdot & w^{(n-1)^2}\\
\end{array}\right ]
$$
and
$$
F^{-1} = \frac{1}{n}
\left [\begin{array}{ccccc}
1 & 1 & 1 & \cdot & 1\\
1 & w & w^{-2} & \cdot & w^{-(n-1)}\\
1 & w^{-2} & w^{-4} & \cdot & w^{-2(n-1)}\\
\cdot & \cdot & \cdot & \cdot & \cdot\\
1 & w^{-(n-1)} & w^{-2(n-1)} & \cdot & w^{-(n-1)^2}\\
\end{array}\right ].
$$
Actually, the matrix $F$ is symmetric, so the inverse can be computed by merely
taking the complex conjugate of $F$ and dividing by $n$, i.e., $F^{-1} =
\frac{1}{n}F^* = \frac{1}{n}\overline{F}$.


\subsection{Convolution Theorem}

We use the matrix $F$ to diagonalize $C$.
Since the columns of the matrix $F$ are exactly the eigenvectors of $C$, we have the following
\begin{equation}
\label{eq:diag1}
C = D^{-1}\diag(\Lambda_0,\dots,\Lambda_{n-1})D,
\end{equation}
where $\Lambda_{k}$ is the eigenvalue of $C$ corresponding to the eigenvector $\chi_k$.
The equation~\ref{eq:diag1} can be interpreted as follows: to apply the linear mapping $C$ it is the same as to change to the Fourier basis using $D$, then to apply a diagonal matrix, and then to change back to the standard basis using $D^{-1}$.
Note that spectral decomposition in fact requires orthonormal basis, i.e., every vector must have norm $1$, but in our case the Fourier basis orthogonal since all the vectors are of norm $n$.
This can be easily modified, but we stick here to the notation that is usually used in the textbooks.

From the previous analysis we know that $P \chi_k = w^k \chi_k$.
Moreover, from~(\ref{eq:circulant_eigenvalue1}) it follows that
\begin{equation}
\label{eq:circulant_eigenvalue2}
\Lambda_k = c_0w^{0k} + c_{n-1}w^{1k} + c_{n-1}w^{2k} + \cdots + c_1w^{(n-1)k}.
\end{equation}
The right-hand side can be rearranged using the following relations:
\begin{equation}
\label{eq:conjugate_relations}
w^{-1k} = w^{(n-1)k}, \quad w^{-2k} = w^{(n-2)k}, \quad \dots,\quad w^{-(n-1)k} = w^{1k}.
\end{equation}
By applying~(\ref{eq:conjugate_relations}) to~(\ref{eq:circulant_eigenvalue2}), we get
\begin{equation}
\label{eq:circulant_eigenvalue3}
\Lambda_k = c_0w^{0k} + c_{1}w^{-1k} + c_{2}w^{-2k} + \cdots + c_{n-1}w^{-(n-1)k}.
\end{equation}
Notice that the vector
$$
\left [\begin{array}{c}
w^{0k}\\
w^{-1k}\\
w^{-2k}\\
\cdot\\
w^{-(n-1)k}\\
\end{array}\right ]
$$
is exactly the $k$-th column of $\overline{F}$. Therefore,
\begin{equation}
\label{eq:circulant_eigenvalue4}
\overline{F}c = 
\left [\begin{array}{c}
\Lambda_{0}\\
\Lambda_{1}\\
\Lambda_{2}\\
\cdot\\
\Lambda_{n-1}
\end{array}\right ],
\end{equation}
i.e., the diagonal entries of $\diag(\Lambda_0,\dots,\Lambda_{n-1})$ are exactly the components of $\overline{F}c = nDc$.

From~(\ref{eq:circulant_eigenvalue4}), we get
\begin{equation}
\label{eq:diag2}
c*d = Cd = D^{-1}\diag(\Lambda_0,\dots,\Lambda_{n-1})Dd = D^{-1} (nDc \circ Dd),
\end{equation}
where $\circ$ denotes the \emph{Hadamard product} (componentwise) of two vectors.
We proved the following.

\begin{theorem}[Convolution theorem]
\label{thm:convolution_theorem}
For any two vectors $c,d \in \complex^n$, we have $$D(c * d) = n(Dc \circ Dd).$$
\end{theorem}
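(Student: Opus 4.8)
The plan is to read the statement directly off the spectral decomposition of the circulant $C$ recorded in~(\ref{eq:diag1}), combined with the identification of the eigenvalue vector in~(\ref{eq:circulant_eigenvalue4}). First I would recall that $C = D^{-1}\diag(\Lambda_0,\dots,\Lambda_{n-1})D$ and that, because $F$ is symmetric and $D = \frac{1}{n}F^*$, we have $\overline{F} = F^* = nD$; hence~(\ref{eq:circulant_eigenvalue4}) says precisely that the vector of eigenvalues is $(\Lambda_0,\dots,\Lambda_{n-1})^T = \overline{F}c = nDc$. This is the one bookkeeping identity that makes the two sides line up, so it is worth isolating at the outset.

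Next I would invoke the elementary but crucial observation that a diagonal matrix acts on a vector by componentwise multiplication: for any vector $u$, one has $\diag(\Lambda_0,\dots,\Lambda_{n-1})\,u = (\Lambda_0,\dots,\Lambda_{n-1})^T \circ u$. Applying this with $u = Dd$ and substituting the eigenvalue vector $nDc$ from the previous step turns the diagonalization into
$$c * d = Cd = D^{-1}\diag(\Lambda_0,\dots,\Lambda_{n-1})Dd = D^{-1}\bigl(nDc \circ Dd\bigr),$$
which is exactly~(\ref{eq:diag2}). At this point the entire content of the theorem is already present; only a cosmetic rearrangement remains.

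Finally I would left-multiply both sides by $D$, using $DD^{-1} = I$, to cancel the outer change of basis and obtain $D(c*d) = n(Dc \circ Dd)$, as claimed. I do not expect a genuine obstacle here: the proof is a short assembly of facts established earlier, and the only place demanding care is the bridge between the \emph{algebraic} object $\diag(\Lambda_0,\dots,\Lambda_{n-1})$ and the \emph{componentwise} Hadamard product $\circ$, together with the correct placement of the normalizing factor $n$ coming from $\overline{F} = nD$. Once those are pinned down, the conclusion is immediate.
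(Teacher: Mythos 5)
Your proposal is correct and follows exactly the paper's own route: it assembles the diagonalization $C = D^{-1}\diag(\Lambda_0,\dots,\Lambda_{n-1})D$ from~(\ref{eq:diag1}) with the identification $(\Lambda_0,\dots,\Lambda_{n-1})^T = \overline{F}c = nDc$ from~(\ref{eq:circulant_eigenvalue4}), passes through~(\ref{eq:diag2}) via the observation that a diagonal matrix acts as a Hadamard product, and finishes by left-multiplying by $D$. Nothing is missing, and the one point you flag as needing care (the placement of the factor $n$) is handled correctly.
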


In words, the previous theorem states that the discrete Fourier transform of the convolution of two vectors equals (up to scaling by $n$) the componentwise product of the discrete Fourier transforms of $c$ and $d$.
It is possible to get rid of the scaling factor $n$ by choosing an orthonormal basis instead of orthogonal.

The complexity depends on how fast we can multiply a vector by the matrices $F$ and
$F^{-1}$. Using the very special structure of $F$ and $F^{-1}$, this can be done in
$\calO(n\log n)$, which is the main theme of the next section.

\section{Fast Fourier Transform}
\label{sec:fft}

We give the divide and conquer algorithm computing the discrete Fourier transform and the inverse discrete Fourier
transform in time $\calO(n\log n)$, which is due to Cooley and Tukey~\cite{cooley1965algorithm}.
For simplicity, we only show how to compute the product $Fx$, the other can be done analogously.
We start with the following key lemma.

\begin{lemma}
\label{lem:w_squared}
If $m = n/2$, then $w_n^2 = w_m$.
\end{lemma}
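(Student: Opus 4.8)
The plan is to unwind the definition of the roots of unity fixed in Section~\ref{sec:dft} and perform a one-line manipulation of the exponent. Recall that there we set $w = e^{2\pi i/n}$, so in the subscripted notation of this lemma $w_n = e^{2\pi i/n}$ denotes the primitive $n$-th root of unity and, analogously, $w_m = e^{2\pi i/m}$ denotes the primitive $m$-th root of unity. The entire content of the statement is the interaction between squaring such a root and halving the order of the group, so the proof should make that interaction explicit rather than appeal to any deeper structure.

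First I would substitute the definition into the left-hand side and combine the square with the exponent, writing
$$
w_n^2 = \left(e^{2\pi i/n}\right)^2 = e^{2\pi i\cdot 2/n} = e^{2\pi i/(n/2)}.
$$
Then I would use the hypothesis $m = n/2$ to replace the denominator, obtaining $e^{2\pi i/m} = w_m$, which is exactly the claimed identity. No case analysis or convergence argument is needed; the whole computation is a single chain of equalities driven by the exponent law $(e^{x})^2 = e^{2x}$.

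Honestly, there is no genuine obstacle to overcome here: the only point requiring care is the indexing convention, namely that the subscript on $w$ records the \emph{order} of the root, so that $w_n$ and $w_m$ are roots of unity of different orders rather than powers of a single fixed $w$. I would therefore state this convention plainly before the computation so that the reader does not confuse $w_n^2$ (a root of a smaller order) with $w^2$ (the square of a fixed root). The reason the lemma is isolated and labelled \emph{key} is not its difficulty but its role: it is precisely this identity that will let the $n$-point transform be expressed in terms of an $m$-point transform, underpinning the divide-and-conquer recursion of the fast Fourier transform in the remainder of Section~\ref{sec:fft}.
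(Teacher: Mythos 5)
Your proof is correct and follows essentially the same route as the paper: both reduce the claim to the exponent computation $w_n^2 = e^{4\pi i/n} = e^{2\pi i/m} = w_m$ using $m = n/2$. The paper additionally offers a geometric interpretation (squaring doubles the angle $2\pi/n$, which is exactly the angle $2\pi/m$), but that is supplementary intuition, not a different argument.
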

\begin{proof}
We have $w_n^2 = e^{\frac{4\pi i}{n}} = e^{\frac{2\pi i}{m}} = w_m$. The statement of the lemma has
the following geometric explanation. Let $\theta = \frac{2\pi}{n}$ be the angle of $w_n$. The
angle of $w_n^2$ is $2\theta$. On the other hand, $m$ divides $2\pi$ into less, but larger parts
than $w_n$. Since $m = n/2$, $\frac{2\pi}{m} = \frac{4\pi}{n} = 2\theta$.
\end{proof}

We show on an example for $n=4$ how to derive the recursive fast Fourier transform in a matrix notation.
Then we discuss how to generalize this for $n=2^k$, which we can always assume for simplicity.
In this chapter we use $F_n$ to indicate the fact that we are considering a matrix of size $n\times n$.
We also use $w_n$ to denote $e^{2\pi i/n}$.

We want to compute the product $y = F_4 x$.
We start just by applying the regular matrix-vector multiplication.
To find recursion, we apply Lemma~\ref{lem:w_squared} and the properties of complex conjugation.
We have
$$
y =
\left [\begin{array}{c}
y_0\\
y_1\\
y_2\\
y_3
\end{array}\right]
=
\left [\begin{array}{cccc}
1 & 1 & 1 & 1\\
1 & w_4 & w_4^2 & w_4^{3}\\
1 & w_4^2 & w_4^4 & w_4^{6}\\
1 & w_4^{3} & w_4^{6} & w_4^{9}
\end{array}\right ]
\cdot
\left [\begin{array}{c}
\textcolor{red}{x_0}\\
\textcolor{blue}{x_1}\\
\textcolor{red}{x_2}\\
\textcolor{blue}{x_3}
\end{array}\right ]
= F_4 x.
$$
We multiply and rearrange each component of the resulting vector:
$$
\left [\begin{array}{c}
y_0\\
y_1\\
y_2\\
y_3
\end{array}\right ]
=
\left [\begin{array}{c}
\textcolor{red}{x_0} + \textcolor{blue}{x_1} + \textcolor{red}{x_2} + \textcolor{blue}{x_3}\\
\textcolor{red}{x_0} + w_4\textcolor{blue}{x_1} + w_4^2\textcolor{red}{x_2} + w_4^{3}\textcolor{blue}{x_3}\\
\textcolor{red}{x_0} + w_4^2\textcolor{blue}{x_1} + w_4^4\textcolor{red}{x_2} + w_4^{6}\textcolor{blue}{x_3}\\
\textcolor{red}{x_0} + w_4^{3}\textcolor{blue}{x_1}  + w_4^{6}\textcolor{red}{x_2} + w_4^{9}\textcolor{blue}{x_3}
\end{array}\right ]
=
\left [\begin{array}{c}
\textcolor{red}{x_0} + \textcolor{red}{x_2} + (\textcolor{blue}{x_1} + \textcolor{blue}{x_3})\\
\textcolor{red}{x_0} + w_4^2\textcolor{red}{x_2} + w_4(\textcolor{blue}{x_1} + w_4^{2}\textcolor{blue}{x_3})\\
\textcolor{red}{x_0} + w_4^4\textcolor{red}{x_2} + w_4^2(\textcolor{blue}{x_1} + w_4^{4}\textcolor{blue}{x_3})\\
\textcolor{red}{x_0} + w_4^{6}\textcolor{red}{x_2}  + w_4^{3}(\textcolor{blue}{x_1} + w_4^{6}\textcolor{blue}{x_3})
\end{array}\right ].
$$
We apply Lemma~\ref{lem:w_squared} and use the properties of the complex numbers $w_2$ and $w_4$ to
obtain:
$$
\left [\begin{array}{c}
y_0\\
y_1\\
y_2\\
y_3
\end{array}\right ]
=
\left [\begin{array}{c}
\textcolor{red}{x_0} + \textcolor{red}{x_2} + (\textcolor{blue}{x_1} + \textcolor{blue}{x_3})\\
\textcolor{red}{x_0} + w_2\textcolor{red}{x_2} + w_4(\textcolor{blue}{x_1} + w_2\textcolor{blue}{x_3})\\
\textcolor{red}{x_0} + w_2^2\textcolor{red}{x_2} + w_4^2(\textcolor{blue}{x_1} + w_2^{2}\textcolor{blue}{x_3})\\
\textcolor{red}{x_0} + w_2^{3}\textcolor{red}{x_2}  + w_4^{3}(\textcolor{blue}{x_1}+ w_2^{3}\textcolor{blue}{x_3})
\end{array}\right ]
=
\left [\begin{array}{c}
\textcolor{red}{x_0} + \textcolor{red}{x_2} + (\textcolor{blue}{x_1} + \textcolor{blue}{x_3})\\
\textcolor{red}{x_0} + w_2\textcolor{red}{x_2} + w_4(\textcolor{blue}{x_1} + w_2\textcolor{blue}{x_3})\\
\textcolor{red}{x_0} + \textcolor{red}{x_2} - (\textcolor{blue}{x_1} + \textcolor{blue}{x_3})\\
\textcolor{red}{x_0} + w_2\textcolor{red}{x_2} - w_4(\textcolor{blue}{x_1}+ w_2\textcolor{blue}{x_3})
\end{array}\right ].
$$
In the last equation, we are using the fact that $w_4^0 = -w_4^2$ and $w_4^1 = -w_4^3$.
For general $n$, we have $w_n^k = -w_n^{k+n/2}$.
Now, we rearrange again:
$$
\left [\begin{array}{c}
y_0\\
y_1\\
y_2\\
y_3
\end{array}\right ]
=
\left [\begin{array}{c}
\left [\begin{array}{cc}
1 & 1\\
1 & w_2
\end{array}\right ]
\cdot
\left [\begin{array}{c}
\textcolor{red}{x_0}\\
\textcolor{red}{x_2}
\end{array}\right ] +
\left [\begin{array}{cc}
1 & \\
 & w_4
\end{array}\right ]
\cdot
\left [\begin{array}{cc}
1 & 1\\
1 & w_2
\end{array}\right ]
\cdot
\left [\begin{array}{c}
\textcolor{blue}{x_1}\\
\textcolor{blue}{x_3}
\end{array}\right ]\\[4mm]
\left [\begin{array}{cc}
1 & 1\\
1 & w_2
\end{array}\right ]
\cdot
\left [\begin{array}{c}
\textcolor{red}{x_0}\\
\textcolor{red}{x_2}
\end{array}\right ] +
\left [\begin{array}{cc}
-1 & \\
 & -w_4
\end{array}\right ]
\cdot
\left [\begin{array}{cc}
1 & 1\\
1 & w_2
\end{array}\right ]
\cdot
\left [\begin{array}{c}
\textcolor{blue}{x_1}\\
\textcolor{blue}{x_3}
\end{array}\right ]
\end{array}\right ].
$$
Finally, we get
$$
\left [\begin{array}{c}
y_0\\
y_1\\
y_2\\
y_3
\end{array}\right ]
=
\left [\begin{array}{cccc}
1 &  & 1 & \\
 & 1 &  & w_4\\
1 &  & -1 & \\
 & 1 &  & -w_4
\end{array}\right ]
\cdot
\left [\begin{array}{cccc}
1 & 1 &  & \\
1 & w_2 &  & \\
 &  & 1 & 1\\
 &  & 1 & w_2
\end{array}\right ]
\cdot
\left [\begin{array}{c}
\textcolor{red}{x_0}\\
\textcolor{red}{x_2}\\
\textcolor{blue}{x_1}\\
\textcolor{blue}{x_3}
\end{array}\right ].
$$
We put
$$
\left [\begin{array}{c}
y_0\\
y_1\\
y_2\\
y_3
\end{array}\right ]
=
\left [\begin{array}{cc}
I_2 & A_2\\
I_2 & -A_2
\end{array}\right ]
\cdot
\left [\begin{array}{cc}
F_2 & \\
 & F_2
\end{array}\right ]
\cdot
{
\arraycolsep=3pt
\left [\begin{array}{cccc}
1 &  &  & \\
 &  & 1  & \\
 & 1 &  & \\
 &  &  & 1
\end{array}\right ]
}
\cdot
\left [\begin{array}{c}
\textcolor{red}{x_0}\\
\textcolor{blue}{x_1}\\
\textcolor{red}{x_2}\\
\textcolor{blue}{x_3}
\end{array}\right ].
$$

In general, for $n = 2^k$, by a similar derivation, we obtain:
$$
y =
F_nx
=
\left [\begin{array}{cc}
I_{n/2} & A_{n/2}\\
I_{n/2} & -A_{n/2}
\end{array}\right ]
\cdot
\left [\begin{array}{cc}
F_{n/2} & \\
 & F_{n/2}
\end{array}\right ]
\cdot
P_\pi \cdot x,
$$
where $I_{n/2}$ is the $n/2\times n/2$ identity matrix, $A_{n/2} = \diag(1,w_{n},\dots,w_{n}^{n/2-1})$, and $\calP_\pi$ is the permutation matrix which put first all the even components of $x$.

Let $T(n)$ be the time needed to compute $F_n x$. First, the algorithm splits the vector $x$
into the even-numbered components $x_\text{even}$ and the odd-numbered components $x_\text{odd}$,
which can be done in time $\Theta(n)$. Then, the algorithm computes
$F_{\frac{n}{2}}x_\text{even}$ and $F_{\frac{n}{2}}x_\text{even}$. This takes $T(n/2)$ time.
The final multiplication can be also done in the $\Theta(n)$. We get the following recurrence:
$$T(n) = 2T(n/2) + \Theta(n).$$
Using standard methods, it can be easily shown that $T(n) = \calO(n\log{n})$.

\section{Discrete Fourier Transform on Finite Abelian Groups}
\label{sec:abelian}

In this section, we consider functions of the form $v\colon G\to \complex$, where $G$ is a finite abelian group.
We denote the set of all such functions by $\complex^G$.
The set $\complex^G$ is clearly a vector space.
In this section, we use the following shorthand $e(x) := e^{2\pi ix}$.

It is a folklore that every finite abelian group is isomorphic to a direct product of the form
$$\cyc_{k_1}\times \cdots\times \cyc_{k_u},$$
where $k_1,\dots,k_u$ are powers of (\emph{not necessarily distinct}) primes; see for example~\cite{rotman2012introduction}.
For every finite abelian group, we fix a canonical form by choosing the sequence $k_1,\dots,k_u$ to be non-decreasing.
In what follows, when talking about a finite abelian group $G$, we always think about this canonical representation instead.

We order the elements of an abelian group $G$ lexicographically.
Then, we can alternatively think of $v\in\complex^G$ as a complex vector $(v_x)_{x\in G}$ indexed by the group $G$.
We refer to the elements of $\complex^G$ as vectors.
However, we prefer the functional notation for cosmetic reasons.

\heading{Convolution.}
The previous two sections dealt with the case when $G = \cyc_n$.
Here we derive the Fourier transform on $G$, for any finite abelian group $G$.
We again start with the important convolution product of two vectors $v,u\in\complex^G$, which is usually defined by
$$(v * u)(x)  = \sum_{y\in G}v(x - y)g(y).$$
Similarly, the concept of circulant matrix can be generalized for finite abelian groups.
For $v \in \complex^G$, the \emph{$G$-circulant matrix} $C$, indexed by $G$, is a $|G|\times |G|$ matrix defined by
$$C(x,y) := v(x - y),$$
for $x,y \in G$.
Note that here $C(x,y)$ denotes the entry of $C$ indexed by $x$ and $y$.
We use this notation throughout the section.

\begin{example}
Let $G = \cyc_3\times\cyc_2$ and let $v \in \complex^G$ be a vector such that $v = \left[\begin{array}{cccccc}a & b & c & d & e & f\end{array}\right]^T$.
Then the $G$-circulant $C$ of $v$ is the following matrix:
$$
\left[
\begin{array}{cccccc}
a & b & e & f & c & d\\
b & a & f & e & d & c\\
c & d & a & b & e & f\\
d & c & b & a & f & e\\
e & f & c & d & a & b\\
f & e & d & c & b & a
\end{array}
\right].
$$
\qed
\end{example}

The convolution $v*u$ can be represented by multiplying the vector $u$ by the matrix $C$ from the left.

\begin{lemma}
For $v,u \in \complex^G$, we have $v * u  = Cu$, where $C$ is the $G$-circulant matrix of $v$.
\end{lemma}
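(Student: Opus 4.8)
The plan is to prove the identity componentwise by directly unwinding the definition of the matrix--vector product $Cu$. First I would fix an arbitrary $x \in G$ and write down the $x$-component of $Cu$. Since the rows and columns of $C$ are indexed by the elements of $G$, the usual formula for matrix--vector multiplication reads
$$(Cu)(x) = \sum_{y\in G} C(x,y)\, u(y).$$
The next step is to substitute the defining relation $C(x,y) = v(x-y)$ of the $G$-circulant into this sum, which gives
$$(Cu)(x) = \sum_{y\in G} v(x-y)\, u(y).$$
By the definition of convolution on $G$, the right-hand side is precisely $(v*u)(x)$. Since $x \in G$ was arbitrary, the vectors $Cu$ and $v*u$ agree in every component, and therefore $Cu = v*u$.

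The argument is essentially a one-line matching of two summation formulas, so there is no genuine obstacle; the only point that warrants a moment of care is the indexing convention. Because we have chosen to label the rows and columns of $C$ by the lexicographically ordered elements of $G$ rather than by $0,\dots,|G|-1$, I would make explicit that the standard matrix--vector product formula transfers verbatim when the index runs over $G$. Once this is granted, the summation over $y \in G$ produced by the matrix product coincides term for term with the summation over $y \in G$ in the definition of $v*u$, and the identity follows immediately.
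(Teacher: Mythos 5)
Your proof is correct and follows essentially the same approach as the paper: expanding $(Cu)(x) = \sum_{y\in G} C(x,y)u(y)$, substituting $C(x,y) = v(x-y)$, and matching the result with the definition of $(v*u)(x)$. The paper compresses this into a single line, while you spell out the intermediate steps and the indexing convention, but the argument is identical.
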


\begin{proof}
We have $(Cu)(x) = \sum_{y\in G} v(x-y)u(y) = (v*u)(x)$.
\end{proof}

\heading{Eigenvectors of $G$-circulants.}
The following theorem gives a recursive description of any $G$-circulant matrix and of its eigenvectors.
Although it is not difficult to prove, to the best of our knowledge, this theorem is not available in this form in the current literature.

Before stating the theorem, we recall that the \emph{Kronecker product} of an $m\times n$ matrix $A$ with an $p\times q$ matrix $B$ is a $mp\times nq$ matrix $A\otimes B$ defined by
$$
A\otimes B =
\arraycolsep=3pt
\left[
\begin{array}{ccc}
a_{1,1}B & \cdots & a_{1,n}B \\
\vdots & \ddots & \vdots \\
a_{m,1}B & \cdots & a_{m,n}B \\
\end{array}
\right].
$$
We also need several properties of the Kronecker product:
\begin{packed_itemize}
\item
If $Av = \lambda v$ and $Bu = \mu u$, then $\lambda\mu$ is an eigenvalue of $A\otimes B$ with the corresponding eigenvector $v\otimes u$.
Any eigenvalue of $A\otimes B$ arises as such product of eigenvalues of $A$ and $B$.
\item
We have $(A\otimes B)^* = A^* \otimes B^*$.
\item
If $A$ and $C$ are of the same size, $B$ and $D$ are of the same size, then $(A\otimes B) \circ (C\otimes D) = (A\circ C)\otimes (B\circ D)$, where $\circ$ denotes the Hadamard (componentwise) product.
\end{packed_itemize}

\begin{theorem}
\label{thm:g_circulant_recursive}
Let $G$ and $G'$ be abelian groups such that $G$ is not cyclic and $G = \cyc_k \times G'$.
Then for every $G$-circulant $C$, the following hold:
\begin{itemize}
\item[(i)]
There are $G'$-circulants $C_0,\dots,C_{k-1}$ such that
$$C = I\otimes C_0 + P \otimes C_{k-1} + P^2 \otimes C_{k-2} + \cdots + P^{k-1}\otimes C_1,$$
where $P$ is defined in Section~\ref{sec:dft}.
\item[(ii)]
For every $g = (g_1,g') \in G$, there is an eigenvector $\chi_g$ of $C$ such that $\chi_g = \chi_{g_1} \otimes \chi_{g'}$, where $\chi_{g_1}$ is an eigenvector of any $\cyc_k$-circulant, and $\chi_{g'}$ is an eigenvector of any $G'$-circulant.
In particular every $G$-circulant has the same set of eigenvectors.
\end{itemize}
\end{theorem}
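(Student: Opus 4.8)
The plan is to read off a block structure of $C$ from the lexicographic ordering of $G = \cyc_k \times G'$, which splits the rows and columns into $k$ consecutive blocks of size $|G'|$ indexed by the $\cyc_k$-coordinate. Writing a group element as a pair $(a,x)$ with $a \in \cyc_k$ and $x \in G'$, the definition of the $G$-circulant gives $C((a,x),(b,y)) = v(a-b,\,x-y)$. First I would fix $a,b$ and examine the resulting $|G'| \times |G'|$ block: setting $v_j(z) := v(j,z)$ for $j \in \cyc_k$, this block has $(x,y)$-entry $v_{a-b}(x-y)$, which is exactly the entry of the $G'$-circulant $C_{a-b}$ of the vector $v_{a-b}$ (the index read modulo $k$). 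Hence $C$ is block-circulant with $(a,b)$-block equal to $C_{(a-b)\bmod k}$, mirroring the scalar identity $C(i,j) = c_{(i-j)\bmod n}$ from Section~\ref{sec:dft}.

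For part~(i) it then remains to recognise this block-circulant as a sum of Kronecker products. The $(a,b)$-block of $P^m \otimes B$ is $P^m(a,b)\,B$, which is $B$ when $b \equiv a+m \pmod k$ and $0$ otherwise; so the $(a,b)$-block of $\sum_m P^m \otimes B_m$ is $B_{(b-a)\bmod k}$. Matching this against $C_{(a-b)\bmod k}$ forces $B_m = C_{(-m)\bmod k}$, that is $B_0 = C_0$ and $B_m = C_{k-m}$ for $m \geq 1$, which is precisely the claimed identity $C = I\otimes C_0 + P\otimes C_{k-1} + \cdots + P^{k-1}\otimes C_1$. I expect the only genuinely fiddly point to be keeping this reversal ($C_{a-b}$ in the block versus $B_{b-a}$ in the Kronecker sum) straight; it is entirely analogous to the reindexing already done for the ordinary circulant.

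For part~(ii) I would substitute the decomposition from~(i) into the mixed-product rule for Kronecker products. From Section~\ref{sec:dft} we have $P\chi_{g_1} = w_k^{g_1}\chi_{g_1}$ with $w_k = e^{2\pi i/k}$, hence $P^m\chi_{g_1} = w_k^{m g_1}\chi_{g_1}$. The inductive hypothesis --- that all $G'$-circulants are diagonalised by one common basis $\{\chi_{g'}\}_{g'\in G'}$, the base case $G'$ cyclic being Section~\ref{sec:dft} --- gives $B_m\chi_{g'} = \mu_m\chi_{g'}$ for scalars $\mu_m$. By the first listed property of the Kronecker product, $\chi_{g_1}\otimes\chi_{g'}$ is then an eigenvector of each summand $P^m\otimes B_m$ with eigenvalue $w_k^{m g_1}\mu_m$; since every summand shares this eigenvector, $\chi_g := \chi_{g_1}\otimes\chi_{g'}$ is an eigenvector of $C$ with eigenvalue $\sum_{m=0}^{k-1} w_k^{m g_1}\mu_m$, which is the asserted factorisation.

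Finally I would observe that $\chi_g$ is built only from the eigenvectors of $P$ and the common $G'$-eigenbasis, so it does not depend on the particular $G$-circulant $C$. As $(g_1,g')$ ranges over $G$, the vectors $\chi_{g_1}\otimes\chi_{g'}$ are tensor products of a basis of $\complex^{\cyc_k}$ with a basis of $\complex^{G'}$ and are $k\,|G'| = |G|$ in number, hence form a basis of $\complex^{G}$. This delivers simultaneously a full eigenbasis of $C$ and the ``in particular'' statement that every $G$-circulant is diagonalised by the same vectors. The whole statement is then an induction on the number of cyclic factors of $G$, with Section~\ref{sec:dft} as the base case.
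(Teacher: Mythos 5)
Your proof is correct, and while part~(i) follows essentially the paper's route, part~(ii) takes a genuinely different---and in fact sounder---direction. For~(i), both you and the paper read off the block structure of $C$ from the lexicographic ordering: each $|G'|\times|G'|$ block indexed by $(a,b)\in\cyc_k^2$ is the $G'$-circulant of $v_{a-b}$, and one then matches the block-circulant pattern against $\sum_m P^m\otimes B_m$; your explicit bookkeeping of the index reversal ($C_{a-b}$ versus $B_{b-a}$, forcing $B_m = C_{(-m)\bmod k}$) is exactly the content of the paper's orbit argument for the diagonal $\cyc_k$-action. For~(ii), however, the paper argues top-down: it asserts that any eigenvector $v$ of $C$ is automatically an eigenvector of each summand $P^i\otimes C_{k-i}$, and then invokes the induction hypothesis to conclude that $v$ has tensor form. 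As stated, that first step is not a valid inference---an eigenvector of a sum of matrices need not be an eigenvector of the individual summands---and it is only salvaged by the simultaneous diagonalizability that is in the course of being proved. You instead argue bottom-up: you take the candidate vectors $\chi_{g_1}\otimes\chi_{g'}$, show each is an eigenvector of every summand $P^m\otimes B_m$ with eigenvalue $w_k^{mg_1}\mu_m$ via the eigenvalue property of the Kronecker product, and sum to get the eigenvalue $\sum_{m=0}^{k-1} w_k^{mg_1}\mu_m$ for $C$. This matches exactly the existential form of claim~(ii), and your closing dimension count---the $k\,|G'| = |G|$ tensor products of two bases form a basis of $\complex^G$, independent of $C$---makes the ``in particular every $G$-circulant has the same set of eigenvectors'' clause airtight, a point the paper leaves implicit.
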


\begin{proof}
First, we prove (i).
Consider the submatrix $C_{i,j}$ of $C$ indexed by the sets $I = \{(i,x') \in G : x' \in G'\}$ and $J = \{(j,x') \in G : x' \in G'\}$, for some fixed $i,j \in \cyc_k$.
For every $x \in I$ and $y \in J$, we have
$$C_{i,j}(x,y) = f(x - y) = f((i-j,x'-y')).$$
It follows that $C_{i,j}$ is a $G'$-circulant.

Clearly, $C_{i,j} = C_{i+m,j+m}$, for every $m \in \cyc_k$.
The mapping $(i,j) \mapsto (i+m,j+m)$ defines an action of $\cyc_k$ on $\cyc_k^2$.
We have $C_{i,j} = C_{i',j'}$ if $(i,j)$ and $(i',j')$ belong to the same orbit under this action.
The powers of the permutation matrix $P$ are in one-to-one correspondence with the orbits of this action.
We put $C_m := C_{0,k - m}$, for $m = 0,\dots, k-1$, and the addition taken modulo $k$.
This concludes the proof of part (i).

Now, we prove (ii).
From (i), we know that if $v$ is an eigenvector of $C$, then it is also an eigenvector of each $P^i\otimes C_{k-i}$, for $i=0,\dots,k-1$.
By induction, we may assume that every $G'$-circulant has the same set of eigenvectors.
Therefore, $v$ must be of the form $\chi_{g_1}\otimes \chi_{g'}$, where $\chi_{g_1}$ and $\chi_{g'}$ are exactly as in the statement.
\end{proof}

\heading{Fourier Basis.}
Denote the set of eigenvectors of any $G$-circulant by $\calF = \{\chi_g : g \in G\}$.
Before we prove that $\calF$ forms an orthogonal basis of $\complex^{G}$, we derive a precise description of every $\chi_g \in\calF$.

\begin{lemma}
\label{lem:eigenvectors}
Let $G = \cyc_{k_1}\times \cdots \times\cyc_{k_u}$ be an abelian group.
Then for $\chi_g \in \calF$,
$$\chi_g(x) = e\left( \sum_{i=1}^u\frac{g_i x_i}{k_i}\right),\quad x\in G.$$
\end{lemma}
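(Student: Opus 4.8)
The plan is to induct on the number $u$ of cyclic factors of $G$, combining the recursive description of eigenvectors in Theorem~\ref{thm:g_circulant_recursive}(ii) with the additive property $e(a)e(b) = e(a+b)$ of the exponential, which follows from $e^{2\pi i a}e^{2\pi i b} = e^{2\pi i(a+b)}$.

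For the base case $u = 1$, the group $G = \cyc_{k_1}$ is cyclic, so a $G$-circulant is an ordinary circulant as in Section~\ref{sec:dft}. Its eigenvectors were computed explicitly there: with $w = e^{2\pi i/k_1}$ we have $\chi_{g_1}(x_1) = w^{g_1 x_1} = e(g_1 x_1 / k_1)$, which is exactly the claimed formula with a single summand. For the inductive step, write $G = \cyc_{k_1}\times G'$ with $G' = \cyc_{k_2}\times\cdots\times\cyc_{k_u}$. By Theorem~\ref{thm:g_circulant_recursive}(ii), every $\chi_g\in\calF$ factors as $\chi_g = \chi_{g_1}\otimes\chi_{g'}$, where $\chi_{g_1}$ is an eigenvector of a $\cyc_{k_1}$-circulant and $\chi_{g'}$ is an eigenvector of a $G'$-circulant. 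Under the lexicographic indexing of $G$ by pairs $(x_1,x')$, the Kronecker product is read componentwise as $\chi_g(x) = \chi_{g_1}(x_1)\,\chi_{g'}(x')$. The base case gives $\chi_{g_1}(x_1) = e(g_1 x_1/k_1)$, and the inductive hypothesis applied to $G'$ (which has $u-1$ factors) gives $\chi_{g'}(x') = e\left(\sum_{i=2}^u g_i x_i / k_i\right)$. Multiplying and combining the exponents yields
$$\chi_g(x) = e\left(\frac{g_1 x_1}{k_1}\right)e\left(\sum_{i=2}^u \frac{g_i x_i}{k_i}\right) = e\left(\sum_{i=1}^u \frac{g_i x_i}{k_i}\right),$$
which is the desired identity.

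The step requiring the most care is the componentwise reading of the Kronecker product: one must check that, under the fixed lexicographic ordering on $G$, the entry of $\chi_{g_1}\otimes\chi_{g'}$ indexed by $(x_1,x')$ is exactly $\chi_{g_1}(x_1)\chi_{g'}(x')$, so that the functional indexing and the vector indexing agree. Once this bookkeeping is fixed, the remainder is just the homomorphism property of $e(\cdot)$. A minor point is that the decomposition must be taken with respect to the canonical prime-power factorization so that the indices $k_i$ match the statement; the non-cyclicity hypothesis of Theorem~\ref{thm:g_circulant_recursive} serves only to ensure the recursion peels off a nontrivial factor, a role already played by the induction on $u$.
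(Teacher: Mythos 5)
Your proof is correct and follows essentially the same route as the paper's: induction on the number $u$ of cyclic factors, with the base case taken from Section~\ref{sec:dft} and the inductive step combining Theorem~\ref{thm:g_circulant_recursive}(ii) with the multiplicativity of $e(\cdot)$ to merge the exponents. Your additional remarks on the lexicographic indexing of the Kronecker product and on the non-cyclicity hypothesis only make explicit some bookkeeping that the paper leaves implicit.
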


\begin{proof}
We prove the lemma by induction on $u$.
If $u = 1$, then the group $G$ is cyclic and the statement follow from the arguments in Section~\ref{sec:dft}.
Suppose that the statement holds for any $G'$ with at most $u-1$ factors.
By Theorem~\ref{thm:g_circulant_recursive}, we have $\chi_g = \chi_{g_1} \otimes \chi_{g'}$, where $g = (g_1,g')$, for some $g_1 \in \cyc_{k_1}$ and $g' \in G' = \cyc_{k_2}\times\cdots\times \cyc_{k_u}$.
By the induction hypothesis and the definition of the Kronecker product we have
$$\chi_g(x) = (\chi_{g_1} \otimes \chi_{g'})(x) = \chi_{g_1}(x_1)\chi_{g'}(x_2,\dots,x_u) =$$
$$= e\left(\frac{g_1 x_1}{k_1}\right) \cdot e\left( \sum_{i=2}^u\frac{g_i x_i}{k_i}\right) = e\left( \sum_{i=1}^u\frac{g_i x_i}{k_i}\right).$$
\end{proof}

Recall that the inner product on $\complex^G$ is defined in an expected way
$v^*u := \sum_{x\in G}\overline{v(x)}u(x)$.
Finally, we prove that the set $\calF$ forms an orthogonal basis, called the \emph{Fourier basis}.

\begin{theorem}[Fourier basis]
The set $\calF = \{\chi_g : g \in G\}$ forms an orthogonal basis.
\end{theorem}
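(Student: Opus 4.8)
The plan is to compute the inner product $\chi_g^*\chi_h$ directly from the explicit formula in Lemma~\ref{lem:eigenvectors} and show that it equals $|G|$ when $g = h$ and $0$ otherwise. Since $\complex^G$ has dimension $|G|$, producing $|G|$ pairwise orthogonal (hence linearly independent) vectors immediately yields an orthogonal basis.

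First I would expand the inner product. Using $\chi_g(x) = e\left(\sum_{i=1}^u g_i x_i / k_i\right)$ together with $\overline{e(t)} = e(-t)$, I obtain
$$\chi_g^*\chi_h = \sum_{x \in G} \overline{\chi_g(x)}\chi_h(x) = \sum_{x\in G} e\left(\sum_{i=1}^u \frac{(h_i - g_i)x_i}{k_i}\right).$$
The key step is to observe that as $x = (x_1,\dots,x_u)$ ranges over $G = \cyc_{k_1}\times\cdots\times\cyc_{k_u}$, each coordinate $x_i$ ranges independently over $\cyc_{k_i}$, and the exponential of a sum factors as a product of exponentials. Hence the group sum factorizes as
$$\chi_g^*\chi_h = \prod_{i=1}^u \left(\sum_{x_i = 0}^{k_i - 1} e\left(\frac{(h_i - g_i)x_i}{k_i}\right)\right).$$

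Next I would evaluate each factor. Setting $\alpha_i = e\left((h_i - g_i)/k_i\right)$, the $i$-th factor is the geometric sum $\sum_{x_i=0}^{k_i-1}\alpha_i^{x_i}$, which is exactly the sum analyzed in the proof of Lemma~\ref{lem:fourier_orthogonal}. If $g_i = h_i$ in $\cyc_{k_i}$, then $\alpha_i = 1$ and the factor equals $k_i$; if $g_i \neq h_i$, then $\alpha_i \neq 1$ while $\alpha_i^{k_i} = e(h_i - g_i) = 1$, so the geometric series vanishes. Consequently the entire product is $0$ as soon as $g$ and $h$ differ in a single coordinate, and it equals $\prod_{i=1}^u k_i = |G|$ precisely when $g = h$.

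Finally, each $\chi_g$ has all entries of modulus $1$ and is therefore nonzero, and the computation above shows that the vectors of $\calF$ are pairwise orthogonal; hence they are linearly independent, and since $|\calF| = |G| = \dim\complex^G$, they form an orthogonal basis. I do not anticipate a genuine obstacle: the only point requiring care is the factorization of the group sum into a product over the cyclic factors, which rests on the multiplicativity of $e(\cdot)$ and the independence of the coordinates. An alternative, more in the spirit of the preceding results, would be to induct on $u$ via $\chi_g = \chi_{g_1}\otimes\chi_{g'}$ and reduce to the cyclic base case of Lemma~\ref{lem:fourier_orthogonal}; however, that route needs the mixed-product identity for Kronecker products, which is not among the properties recalled above, so the direct computation is cleaner.
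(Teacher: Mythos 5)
Your proof is correct, but it follows a genuinely different route from the paper's. The paper proves orthogonality by induction on the number of cyclic factors: it writes $\chi_g = \chi_{g_1}\otimes\chi_{g'}$ and $\chi_h = \chi_{h_1}\otimes\chi_{h'}$ using Theorem~\ref{thm:g_circulant_recursive}, applies the Kronecker--Hadamard identity to get $\overline{\chi_g}\circ\chi_h = (\overline{\chi_{g_1}}\circ\chi_{h_1})\otimes(\overline{\chi_{g'}}\circ\chi_{h'})$, splits $\chi_g^*\chi_h$ into a sum over $\cyc_{k_1}$ times $\chi_{g'}^*\chi_{h'}$, and closes by induction, with the cyclic case of Lemma~\ref{lem:fourier_orthogonal} as the base. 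You instead compute $\chi_g^*\chi_h$ in closed form from Lemma~\ref{lem:eigenvectors}, factor the sum over $G$ into a product of $u$ geometric sums (one per cyclic factor), and evaluate each one directly. Your version buys uniformity and, in fact, slightly more completeness: the product formula makes it evident that a \emph{single} differing coordinate $g_i \neq h_i$ annihilates the whole inner product, whereas the paper's inductive step as written (``if $g\neq h$, then by induction $\chi_{g'}^*\chi_{h'}=0$'') only covers the case $g'\neq h'$; when $g'=h'$ but $g_1\neq h_1$, one has $\chi_{g'}^*\chi_{h'} = |G'| \neq 0$, and the vanishing must instead come from the cyclic sum $\sum_{x_1\in\cyc_{k_1}}\overline{\chi_{g_1}(x_1)}\chi_{h_1}(x_1) = 0$ --- a case your argument handles automatically. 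What the paper's route buys is that it stays entirely within the recursive Kronecker framework built in Theorem~\ref{thm:g_circulant_recursive}, exhibiting orthogonality as a structural consequence of the tensor decomposition rather than of the explicit character formula. One small correction to your closing remark: the identity you say is unavailable, $(A\otimes B)\circ(C\otimes D) = (A\circ C)\otimes(B\circ D)$, is in fact the third bullet among the Kronecker-product properties recalled before Theorem~\ref{thm:g_circulant_recursive}, so the inductive route is perfectly available --- and it is precisely the one the paper takes.
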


\begin{proof}
In Section~\ref{sec:dft}, we proved this for the case when $G$ a cyclic group.
Assume now that $G$ is not cyclic.
Let $\chi_g, \chi_h \in \calF$.
By Theorem~\ref{thm:g_circulant_recursive}, we have
$\chi_g = \chi_{g_1}\otimes\chi_{g'}$ and $\chi_h = \chi_{h_1}\otimes\chi_{h'}$.
Moreover, from the properties of the Kronecker product, it follows that
$$\overline{\chi_g} \circ \chi_h = (\overline{\chi_{g_1}}\otimes\overline{\chi_{g'}})\circ (\chi_{h_1}\otimes\chi_{h'}) = (\overline{\chi_{g_1}}\circ\chi_{h_1})\otimes(\overline{\chi_{g'}}\circ\chi_{h'}).$$
To compute $\chi_g^*\chi_h$, we proceed by induction. It suffices to evaluate the sum
$$\sum_{x\in G}(\overline{\chi_g}\circ\chi_h)(x) = \sum_{(x_1,x')\in G}\overline{\chi_{g_1}(x_1)}\cdot\chi_{h_1}(x_1)\cdot\overline{\chi_{g'}(x')}\cdot\chi_{h'}(x') =$$
$$= \sum_{x_1\in\cyc_{k_1}}\left(\overline{\chi_{g_1}(x_1)}\cdot\chi_{h_1}(x_1)\sum_{x'\in G'}\overline{\chi_{g'}(x')}\cdot\chi_{h'}(x')\right)
= \sum_{x_1\in\cyc_{k_1}}\overline{\chi_{g_1}(x_1)}\cdot\chi_{h_1}(x_1)\cdot \chi_{g'}^*\chi_{h'}.$$
If $g = h$, then by it easily follows from Lemma~\ref{lem:eigenvectors} that $\chi_g^*\chi_h = |G|$.
If $g\neq h$, then by induction, we have $\chi_{g'}^*\chi_{h'} = 0$, therefore, also $\chi_{g}^*\chi_{h} = 0$.
\end{proof}

The standard basis of $\complex^G$ consists of vectors $\delta_g$, where $\delta_g(x) = 1$ if $x = g$ and $\delta_g(x) = 0$ if $x\neq g$.
The \emph{discrete Fourier transform on the finite abelian group $G$} is then the change of basis matrix from the standard basis to the Fourier basis.
The matrix could be derived in an analogous way as for $G = \cyc_n$ in Section~\ref{sec:dft}, but we omit it here.

\heading{Fourier Analysis on Boolean Cube.}
If $G = \cyc_2^n$, then $G$ is often called \emph{Boolean cube}.
We note that in this special case the eigenvectors can be identified with the subsets of $\{1,\dots,n\}$.
In particular, for every $g\in \cyc_2^n$, we put $S_g := \{i : g_i = 1\}$.
Then by Lemma~\ref{lem:eigenvectors}, for $x \in \cyc_2^n$, we have 
$$\chi_g(x) = (-1)^{\sum_{i\in S_g}x_i}.$$
Fourier analysis on Boolean cube is especially important in many applications in theoretical computer science; for a survey see~\cite{de2008brief}.

\heading{Characters of an Abelian Group.}
Finally, we conclude this section by comparing our derivation of Fourier transform on $G$ with the approach in other textbooks. 
Typically, the exposition starts by introducing characters of abelian groups.

Let $\torus = \{z\in\complex : |z| = 1\}$ be the multiplicative group of complex numbers on the unit circle.
A \emph{character}, also called \emph{one-dimensional representation}, of a finite abelian group $G$ is a homomorphism $\chi\colon G\to\torus$.
It can be proved that characters form an orthogonal basis and this is then called the Fourier basis.
The issue is that at the first encounter, this construction, though elegant, might seem to come out of the blue.

On the other hand, we first start with the convolution product.
Its great importance clearly motivates our next endeavour.
We realize that it can by represented by a linear mapping, namely the $G$-circulant.
Then the natural step is to compute its eigenvectors and to discover that they form an orthogonal basis.
Moreover, from Lemma~\ref{lem:eigenvectors} it is also easy to see that every eigenvector of a $G$-circulant is a character, and it is also straightforward to prove that every character of an abelian group $G$ is an eigenvector of every $G$-circulant.

Nevertheless, the connection between characters and Fourier basis is crucial when developing Fourier analysis on finite non-abelian groups, which relies on representation theory of groups.
This is out of the scope of this paper.

\bibliographystyle{plain}
\bibliography{fourier}

\end{document}